\newtheorem{theorem}{Theorem}
\newtheorem{lemma}[theorem]{Lemma}
\newtheorem{proposition}[theorem]{Proposition}
\newtheorem{corollary}[theorem]{Corollary}
\theoremstyle{definition}
\newtheorem{definition}[theorem]{Definition}
\newcommand{\CC}{\mathbb C}
\newcommand{\defeq}{\stackrel{\mathrm{def}}=}
\newcommand{\id}{\mathrm{id}}
\newcommand{\swp}{\sigma}
\newcommand{\Id}{\mathrm{I}}
\newcommand{\AMCat}{\mathfrak{AMCat}}
\newcommand{\SMCat}{\mathfrak{SMCat}}
\newcommand{\mitem}[1]{\vspace{\itemsep}\par\noindent\setlength{\leftskip}{\leftmargin}\hspace{-\leftmargin}\textbf{#1}\hspace*{\labelsep}}
\newcommand{\mend}{\setlength{\leftskip}{0cm}}
\DeclareMathSymbol{\subfac}{\mathord}{operators}{"3C}
\title{Universal Properties in Quantum Theory}
\author{Mathieu Huot
\institute{ENS Paris-Saclay}
\and
Sam Staton
\institute{University of Oxford}
}
\begin{document}
\maketitle 

\begin{abstract}
We argue that notions in quantum theory should have universal properties in the sense of category theory. 
We consider the completely positive trace preserving (CPTP) maps, the basic notion of quantum channel. 
Physically, quantum channels are derived from pure quantum theory by allowing discarding. 
We phrase this in category theoretic terms by showing that the category of CPTP maps is the universal monoidal category with a terminal unit that has a functor from the category of isometries. In other words, the CPTP maps are the affine reflection of the isometries.
\end{abstract}

\section{Introduction}
\newcommand{\Isometry}{\mathbf{Isometry}}
\newcommand{\Injection}{\mathbf{Injection}}
\newcommand{\Tennent}{\mathbf{Tennent}}
\newcommand{\Function}{\mathbf{Function}}
\newcommand{\Embedding}{E}
\newcommand{\CPTP}{\mathbf{CPTP}}
\newcommand{\CatA}{\mathbf{C}}
\newcommand{\CatB}{\mathbf{D}}
\newcommand{\tr}{\mathrm{tr}}
\newcommand{\QIsometry}{\mathbf{Isometry_2}}
\newcommand{\QCPTP}{\mathbf{CPTP_2}}
\newcommand{\Inj}{\mathit{Inj}}
The basic foundation of statistical quantum mechanics and quantum channels is usually motivated as follows.
\begin{description}
\item[Step 1.] Pure quantum theory is not random, and is moreover reversible. 
\item[Step 2.] Pure quantum theory does not allow us to discard or hide parts of a system.
\item[Step 3.] Full quantum theory accounts for the perspective of an observer for whom some things are hidden.  Hiding/discarding parts of a system can lead to randomness, mixed states, and quantum channels. 
\end{description}
\par
\newcommand{\ket}[1]{\ensuremath{\left|#1\right\rangle}} 
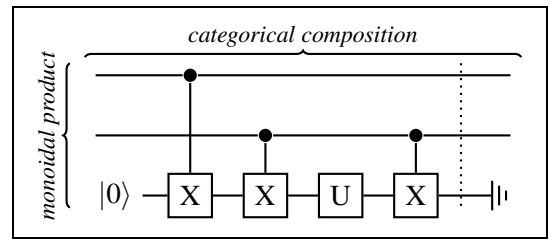
\begin{wrapfigure}[9]{r}{7cm}%
\vspace{-12pt}
{\framebox{    \begin{tikzpicture}[thick,circuit ee IEC,yscale=0.8]
    \tikzstyle{operator} = [draw,fill=white,minimum size=1.5em] 
    \tikzstyle{phase} = [fill,shape=circle,minimum size=5pt,inner sep=0pt]
    %
    \node at (-0.4,0) (q1) {};
    \node at (-0.4,-1) (q2) {};
    \node at (0,-2) (q3) {\ket{0}};
    %
    \node[phase] (op11) at (1,0) {} edge [-] (q1);
    \node[operator] (op13) at (1,-2) {X} edge [-] (q3);
    \draw[-] (op11) -- (op13);
    %
    \node[phase] (op22) at (2,-1) {} edge [-] (q2);
    \node[operator] (op23) at (2,-2) {X} edge [-] (op13);
    \draw[-] (op22) -- (op23);
    %
    \node[operator] (op33) at (3,-2) {U} edge [-] (op23);
    %
    \node[phase] (op42) at (4,-1) {} edge [-] (op22);
    \node[operator] (op43) at (4,-2) {X} edge [-] (op33);
    \draw[-] (op42) -- (op43);
    %
    \node (o1)  at (5.4,0) {} edge [-] (op11);
    \node (o2)  at (5.4,-1) {} edge [-] (op42);
    \node[ground,right] (o3)  at (5,-2) {} edge [-] (op43);
    %
    \draw[decorate,decoration={brace},thick] (-0.6,-2.2) to
	node[midway,left] (bracket) {\rotatebox{90}{\footnotesize \emph{monoidal product}}}
	(-0.6,0.2);
    \draw[decorate,decoration={brace},thick] (-0.4,0.3) to
	node[midway,above] (bracket) {\footnotesize \emph{categorical composition}}
	(5.4,0.3);
    \draw[dotted] (4.6,0.2) to (4.6,-2.2);
    \end{tikzpicture}
  }}
  \caption{\label{figure}
 \emph{A typical quantum circuit.}
  }
\end{wrapfigure}

In this paper we propose to formalize this argument in categorical terms as follows.
We use the language of (symmetric) monoidal categories, which are structures that support two forms of 
combination, as illustrated in Figure~\ref{figure}: the monoidal product for juxtaposing systems, and categorical composition for connecting 
the inputs/outputs of systems. The figure also illustrates the discarding of an ancilla (notated \begin{tikzpicture}[circuit ee IEC,yscale=0.6,xscale=0.5]
\draw (0,0) to (2ex,0) node[ground,xshift=.65ex] {};
\end{tikzpicture}). 

\mitem{Step 1.} Pure quantum theory is based on the monoidal category $\Isometry$ of finite dimensional Hilbert spaces and isometries between them. Recall that an isometry $\CC\to \CC^n$ is an n-level pure state, and every isometry $\CC^n\to\CC^n$ is invertible (unitary).

\mitem{Step 2.} A monoidal category admits discarding when its monoidal unit (representing the empty system) is a terminal object. For then every system $A\otimes B$ has a canonical map $A\otimes B\to A\otimes 1\cong A$, discarding $B$. 
But in the category of isometries, the monoidal unit $\CC$ is not a terminal object. 

\mitem{Step 3.}
Full quantum theory can be interpreted in any symmetric monoidal category 
that contains $\Isometry$ 
but where the unit is  
a terminal object (it supports discarding).
\emph{Our main theorem} (Theorem~\ref{theorem}) is that  the universal such category is
the monoidal category $\CPTP$ of finite dimensional Hilbert spaces and completely positive trace preserving maps between them. Recall that a CPTP map $\CC\to \CC^n$ is an $n$-level mixed state in the usual sense. Thus full quantum theory is canonically determined from pure quantum theory by the universal property of having a terminal unit. 

\mend

\section{Preliminaries}
For completeness, we recall some basic ideas from category theory (e.g.~\cite{joyalstreet}) 
as well as the definitions of the key categories under consideration (e.g.~\cite[Ch.~4]{wilde}, \cite{ck-pictures}). 
The expert reader can skip this section.
\begin{definition}
A \emph{strict monoidal category} 
is a category $\CatA$ together with a functor $\otimes:\CatA\times \CatA\to \CatA$ and an object $I$ such that 
$(A\otimes B)\otimes C=A\otimes (B\otimes C)$, $A\otimes I=I\otimes A=A$, 
$(f\otimes g)\otimes h=f\otimes (g\otimes h)$, and $f\otimes \id_I=\id_I\otimes f=f$. 
A \emph{symmetric strict monoidal category} additionally has a natural isomorphism
$\sigma_{A,B}:A\otimes B\cong B\otimes A$ such that $\sigma_{B,A}\sigma_{A,B}=\id_{A\otimes B}$ and 
\[
\begin{tikzcd}
A\otimes B\otimes C
\ar[d,"\sigma_{A,B}\otimes C"']\ar[dr,"\sigma_{A,B\otimes C}"]
\\B\otimes A\otimes C\ar[r,"B\otimes \sigma_{A,C}"']&B\otimes C\otimes A
\end{tikzcd}
\]
\end{definition}
In what follows, we only consider strict monoidal categories, and so we drop the adjective ``strict''. 
(This is primarily to avoid complicating the presentation with 2-categorical considerations.)
\begin{definition}[Example: Isometries]\label{isometry}
The symmetric monoidal category $\Isometry$ of isometries is defined as follows. 
The objects are natural numbers, with $n$ to be thought of as the Hilbert space $\CC^n$. The
morphisms $m\to n$ are linear maps $f\colon \CC^m\to \CC^n$ that are isometries $(||f(v)||=||v||)$. 
Categorical composition is just composition of linear maps. 
The monoidal structure is given on objects by multiplication: $m\otimes n\defeq mn$.
Note that the Hilbert space $\CC^{mn}$ is the tensor product of $\CC^m$ and $\CC^n$, 
and so we can define the monoidal structure on morphisms by taking the usual tensor product of linear maps. 

More concretely, every isometry is represented by a matrix $V\in \CC^{mn}$ such that $V^*V=\Id$.
Categorical composition is by matrix multiplication. 
The monoidal structure on morphisms is the Kronecker product.
\end{definition}
For example, a pure qubit state is the same thing as an isometry $1\to 2$, and the 
circuit in Figure~\ref{figure} to the left of the dotted line is an isometry $2\otimes 2\to 2\otimes 2\otimes 2$ (i.e.~$4\to 8$). 
\begin{definition}[Example: CPTP maps]
\label{def:cptp}\label{cptp}
The symmetric monoidal category $\CPTP$ of completely positive trace preserving maps is defined as follows. 
The objects are natural numbers, with $n$ to be thought of as the space $M_n=\CC^{n^2}$ of $n\times n$ complex matrices,
that is, as the algebra of operators on the $n$-dimensional Hilbert space.
The morphisms $m\to n$ are linear maps $M_m\to M_n$ that are completely positive and trace preserving. 
The idea is that these are the maps that preserve density matrices even when coupled with an ancilla. 

To formulate a formal definition, first note that $M_{mn}$ is the tensor product of Hilbert spaces $M_m$ and $M_n$,
and so we can combine any linear maps $f:M_m\to M_n$, $g:M_{m'}\to M_{n'}$ into a linear map 
$f\otimes g:M_{mm'}\to M_{nn'}$. 
A linear map $f:M_m\to M_n$ is \emph{positive} if $f(\rho)$ is a positive operator whenever $\rho$ is a positive operator
($v^\top\rho v\geq 0$ for all $v\in \CC^m$). 
The map $f:M_m\to M_n$ is \emph{completely positive} if for every natural number $p$ the linear map 
$(f\otimes p):M_{mp}\to M_{np}$ is positive. 
Categorical composition in $\CPTP$ is composition of linear maps. 
The monoidal structure is again given by multiplication $(m\otimes n\defeq mn)$ on objects and 
on morphisms by the tensor product of linear maps. 

In the literature, CPTP maps are sometimes called `quantum channels' or `superoperators'.
\end{definition}
A special aspect of $\CPTP$, compared to $\Isometry$, is that the monoidal unit $1$ is a terminal object. This means that there is a unique morphism ${!}:n\to 1$ for every object $n$.
This unique CPTP map ${!}:n\to 1$ is the trace operator $\tr(\rho)=\sum_{i=1}^n\rho_{i,i}$. 
The tensor $\id_m\otimes !:m\otimes n\to m\otimes 1=m$ is the partial trace operator
$M_{mn}\to M_m$. 
For example, the circuit in Figure~\ref{figure} to the right of the dotted line is the partial trace operator
$2\otimes 2\otimes 2\to 2\otimes 2$ (i.e.~$8\to 4$). 

\begin{definition}[Symmetric monoidal functor]
If $\CatA$ and $\CatB$ are symmetric monoidal categories, then a strict symmetric monoidal functor is a functor 
$F:\CatA\to \CatB$ such that $F(I_\CatA)=I_\CatB$, $F(A\otimes_\CatA B)=(F(A))\otimes_{\CatB} (F(B))$, and similarly with morphisms, and such that $F(\sigma_\CatA)=\sigma_\CatB$. 

If $V:m\to n$ is an isometry then the mapping $\rho\mapsto V\rho V^*$ is a completely positive trace preserving map $m\to n$. 
This extends to an identity on objects functor $E:\Isometry \to \CPTP$. 
\end{definition}
For example, the entire circuit diagram in Figure~\ref{figure} is a CPTP map $2\otimes 2\to 2\otimes 2$ (i.e.~$4\to 4$),
formed by composing the left hand part (an isometry $4\to 8$ considered as a CPTP map via $E$) with 
the right hand part (a partial trace operator $8\to 4$).

We remark that $E:\Isometry\to \CPTP$ is not faithful because the isometries $1\to 1$ are the global phase shifts, of which there are many, 
whereas the object $1$ is terminal in $\CPTP$. 
This is the collapse of global phase. 
\section{Main theorem}
In this section we state and prove our main theorem: that $\CPTP$ (Def.~\ref{cptp}) is the universal monoidal category with 
a terminal unit with a functor from $\Isometry$ (Def.~\ref{isometry}). 
\begin{theorem}\label{theorem}
For every strict symmetric monoidal category with terminal unit $\CatB$ 
and every symmetric monoidal functor $F:\Isometry\to \CatB$ 
there is a unique symmetric monoidal functor $\hat F:\CPTP\to \CatB$ making the following diagram commute. 
\begin{equation}\begin{tikzcd}
\Isometry \arrow[rd,"F"'] \arrow[r,"\Embedding"]&\CPTP\ar[d,dashed,"\hat F"]\\
&\CatB
\end{tikzcd}
\label{eqn:theorem}
\end{equation}
\end{theorem}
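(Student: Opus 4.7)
The plan is to exploit the Stinespring dilation theorem: every CPTP map $f\colon m\to n$ factors as $f = (\id_n\otimes\,{!}_p)\circ \Embedding(V)$ for some ancilla dimension $p$ and some isometry $V\colon m\to np$, where ${!}_p\colon p\to 1$ denotes the partial trace in $\CPTP$. Since any candidate $\hat F$ must satisfy $\hat F\circ \Embedding = F$, must be strict symmetric monoidal, and must send ${!}_p$ to the unique morphism $F(p)\to I_\CatB$ (because $I_\CatB$ is terminal), its value on $f$ is forced to be
\[ \hat F(f) \;=\; (\id_{F(n)}\otimes\,{!}_{F(p)})\circ F(V). \]
This observation immediately yields uniqueness, and forces $\hat F$ to be identity-on-objects.

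For existence, I take the displayed formula as a \emph{definition}, after choosing some Stinespring dilation of each $f$. The principal obstacle is \emph{well-definedness}. I will invoke the standard uniqueness of Stinespring dilations: any two dilations $(V_1,p_1)$ and $(V_2,p_2)$ of the same $f$ admit isometries $W_i\colon p_i\to q$ into a common ancilla $q$ such that $(\id_n\otimes W_1)\circ V_1 = (\id_n\otimes W_2)\circ V_2$ in $\Isometry$. Applying $F$ and postcomposing with $\id_{F(n)}\otimes\,{!}_{F(q)}$, both sides become $(\id_{F(n)}\otimes({!}_{F(q)}\circ F(W_i)))\circ F(V_i)$; terminality of $I_\CatB$ collapses ${!}_{F(q)}\circ F(W_i)$ to ${!}_{F(p_i)}$, so the two candidate values of $\hat F(f)$ agree. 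This is precisely where the hypothesis on $\CatB$ is used to absorb the gauge freedom of purification, and I expect it to be the delicate step.

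The remaining verifications are routine dilation arithmetic. For \emph{functoriality}, the identity on $m$ dilates as $(\id_m,1)$; if $(V,p)$ dilates $f\colon m\to n$ and $(W,q)$ dilates $g\colon n\to k$, then the middle-four interchange law identifies $g\circ f$ with $(\id_k\otimes\,{!}_{qp})\circ \Embedding((W\otimes \id_p)\circ V)$, and pushing through $F$ together with the equality ${!}_{qp}={!}_q\otimes {!}_p$ (which holds in $\CatB$ by terminality) gives $\hat F(g\circ f) = \hat F(g)\circ \hat F(f)$. For \emph{strict symmetric monoidality}, tensoring dilations of $f$ and $f'$ yields a dilation of $f\otimes f'$, modulo a symmetry on the ancillae that lies in the image of $\Embedding$ and is preserved by $F$ by hypothesis; similar bookkeeping handles the symmetry $\sigma$. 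Commutativity of the triangle is then immediate: any isometry $V$ is its own Stinespring dilation with trivial ancilla $p=1$, giving $\hat F(\Embedding(V)) = F(V)$.
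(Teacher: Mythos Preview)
Your proposal is correct and follows essentially the same approach as the paper: both define $\hat F$ via Stinespring dilations, establish well-definedness using the essential uniqueness of dilations together with terminality of $I_\CatB$, verify functoriality by exhibiting $((W\otimes\id_a)V,\,b\otimes a)$ as a dilation of a composite, and verify monoidality by exhibiting the symmetry-twisted tensor of dilations as a dilation of the tensor. Your write-up is slightly more explicit about why uniqueness is forced and about the triangle commuting via the trivial-ancilla dilation of an isometry, but the argument is the same.
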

Before we prove the theorem, we remark that it uniquely determines $\CPTP$ to within unique
isomorphism: if $E':\Isometry\to \CatA$ also has the unique extension property ($\forall F\exists! \hat F.F=\hat FE'$), then there is a unique 
symmetric monoidal isomorphism $J:\CPTP\cong \CatA$ making $E'=JE$.
This uniqueness is usual for a universal property (e.g.~\cite[\S III.1]{maclane}).
So this universal property could be used as a \emph{definition} of the category $\CPTP$. 
Indeed, we argue that this universal property is more directly motivated than Definition~\ref{def:cptp}, since it comes immediately from a categorical formulation of Steps 1--3 in the introduction. By contrast, to motivate Definition~\ref{def:cptp} directly, one must somehow explain and motivate mixed states and density matrices, then explain why a CPTP map is defined on all matrices not just density matrices, explain why complete positivity rather than positivity is required, and so on. (On the other hand we do not dispute the power of Definition~\ref{def:cptp} as a calculational tool.)

Theorem~\ref{theorem} is a consequence of Stinespring's dilation theorem and its uniqueness condition. We now recall the following 
variant of it.
See for instance, \cite[Thms.~2.2 \& 2.5]{wolf} or 
\cite[Def.~5.2.1, Ex.~5.2.5]{wilde}.
For diagrammatic argument, see \cite[Ch.~6]{ck-pictures};
for an alternative universal uniqueness condition and generalizations, see \cite[Prop.~13]{ww-paschke};
for an analysis from the perspective of probabilistic-theories, see~\cite{purification}.
\begin{lemma}\label{lemma}
Every completely positive trace-preserving map 
$f:m\to n$ can be written as 
\[
f(\varrho)=\tr_a(V\varrho V^*)
\]
for an isometry $V:m\to n\otimes a$, where $\tr_a=n\otimes !:(n\otimes a)\to n$ is partial trace. 

The choice of $V$ and $a$ is unique up to isometries of the ancilla~$a$, in the following sense. 
If $W:m\to n\otimes b$ is an isometry and 
\[f=\tr_a(V(-)V^*)=\tr_b(W(- )W^*)\]
then there are isometries 
$a\xrightarrow {V'} c\xleftarrow{W'}b$ 
such that $(\id_n\otimes V')\circ V = (\id_n\otimes W')\circ W$.
\end{lemma}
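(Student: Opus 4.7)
The plan is to derive both parts from the Kraus (operator-sum) representation of CP maps.

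For existence, I invoke the Kraus theorem: every completely positive $f : M_m \to M_n$ admits linear maps $K_1, \ldots, K_a : \CC^m \to \CC^n$ with $f(\varrho) = \sum_i K_i \varrho K_i^*$, and trace preservation is equivalent to $\sum_i K_i^* K_i = \Id_m$. I define $V : \CC^m \to \CC^n \otimes \CC^a$ by $Vv \defeq \sum_i (K_i v) \otimes e_i$, where $\{e_i\}$ is the standard basis of $\CC^a$. The identity $V^* V = \sum_i K_i^* K_i = \Id_m$ shows $V$ is an isometry, and a direct computation gives $\tr_a(V \varrho V^*) = \sum_i K_i \varrho K_i^* = f(\varrho)$.

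For uniqueness, I reduce the general case to a minimal dilation. Let $V_0 : m \to n \otimes r$ be a Stinespring dilation of $f$ whose ancilla dimension $r$ equals the rank of the Choi operator $J(f) = \sum_k L_k \otimes \overline{L_k}$ (the $L_k$ being the Kraus operators of $V_0$). The key input is the uniqueness of Kraus families up to isometric mixing: for any Kraus family $\{K_i\}_{i=1}^a$ of $f$ there is a unique isometry $\alpha \in \CC^{a \times r}$ with $K_i = \sum_k \alpha_{ik} L_k$, because both families decompose the same positive operator $J(f)$ and the $L_k$ form a basis of its range. Translated back to dilations this yields isometries $\alpha_V : \CC^r \to \CC^a$ and $\alpha_W : \CC^r \to \CC^b$ with $V = (\id_n \otimes \alpha_V) V_0$ and $W = (\id_n \otimes \alpha_W) V_0$. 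I then form $c$ as a pushout of $\alpha_V, \alpha_W$ in the category of finite-dimensional Hilbert spaces with isometries (concretely, $c$ can be taken of dimension $a + b - r$ by orthogonally decomposing $\CC^a$ and $\CC^b$ along the images of $\alpha_V$ and $\alpha_W$), producing isometries $V' : a \to c$ and $W' : b \to c$ with $V' \alpha_V = W' \alpha_W$, and then
\[
(\id_n \otimes V')\, V = (\id_n \otimes V' \alpha_V)\, V_0 = (\id_n \otimes W' \alpha_W)\, V_0 = (\id_n \otimes W')\, W.
\]

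The main obstacle is the Kraus-uniqueness ingredient together with the existence of a minimal dilation of rank $\mathrm{rank}(J(f))$. The standard route is the Choi–Jamiolkowski correspondence: Kraus families of $f$ correspond to decompositions of the positive operator $J(f)$ into rank-one positive terms, and any two such decompositions of a fixed positive matrix are related by an isometric change of coordinates on its range (a Gram-matrix argument). Once this analytic input is granted, the dilation-level statement and the pushout construction above are purely formal.
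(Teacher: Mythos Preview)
The paper does not supply its own proof of this lemma; it is quoted as a known variant of Stinespring's dilation theorem, with references to Wolf's lecture notes, Wilde's textbook, Coecke--Kissinger, and others. Your argument is a correct reconstruction of the standard proof one finds in those sources: the Kraus representation yields the isometry $V$ for existence, and the uniqueness clause follows by passing through a minimal dilation and using the fact (via the Choi matrix) that any Kraus family is an isometric mixture of a minimal one.

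One small remark on phrasing: calling your construction of $c$ a ``pushout in the category of finite-dimensional Hilbert spaces with isometries'' is a slight overstatement, since that category does not have all pushouts (it lacks even binary coproducts). What you actually need, and what your explicit $(a+b-r)$-dimensional construction correctly provides, is merely a common isometric extension of $\alpha_V$ and $\alpha_W$, i.e.\ a cocone, not its universal property. The argument goes through unchanged once this is noted.
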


We now return to the main theorem.
\begin{proof}[Proof of Theorem~\ref{theorem}]
Since $\Embedding$ is identity on objects, we must define $\hat F(n)\defeq F(n)$.

For a CPTP map $f:m\to n$ we use Lemma~\ref{lemma} to pick a Stinespring dilation $(V,a)$ and define $\hat F(f):F(m)\to F(n)$ by
\[
\hat F(f)\defeq \Big(F(m)\xrightarrow{F(V)} F(n\otimes a)=F(n)\otimes F(a)\xrightarrow{F(n)\otimes !} F(n)\Big)
\]
This morphism $\hat F(f)$ is independent of the choice of dilation $(V,a)$. For if $(W,b)$ is another dilation, the uniqueness property in Lemma~\ref{lemma} guarantees that we can find $c$ such that the following diagram commutes:
\[
\begin{tikzcd}
&F(n)\otimes F(a)\ar[dr,"F(n)\otimes F(V')"']\ar[drr,"F(n)\otimes !",bend left=20]
\\
F(m)\ar[ur,"F(V)"]\ar[dr,"F(W)"']&&F(n)\otimes F(c)\ar[r,"F(n)\otimes !"]&F(n)
\\
&F(n)\otimes F(b)\ar[ur,"F(n)\otimes F(W')"]\ar[urr,"F(n)\otimes !"',bend right=10]
\end{tikzcd}
\]
Notice that if any symmetric monoidal functor $\hat F$ is going to make diagram~\eqref{eqn:theorem}
commute then it must be defined in exactly this way because it must preserve the monoidal structure and composition, 
because $!$ is unique, and because it must commute with $F$. Thus, provided $\hat F$ is a symmetric monoidal functor,
it is unique. 

We must check that this definition preserves the monoidal category structure. 
Preservation of identity morphisms is easy: we can pick the dilation $(\Id,1)$. 
For preservation of composition, notice that if $(V,a)$ is a dilation of $f:m\to n$ and $(W,b)$ is a dilation of $g:n\to p$ 
then $((W\otimes \id_a)V,b\otimes a)$ is a dilation of $gf$. 
Indeed 
\[
\begin{tikzcd}
&&F(n)\ar[dr,"F(W)"']\ar[drr,"\hat F(g)"]&&
\\
F(m)\ar[urr,"\hat F(f)"]\ar[r,"F(V)"']\ar[rrrr,bend right=40,"\hat F(gf)"']
&F(n)\otimes F(a)\ar[ur,"F(n)\otimes !"']\ar[dr,"F(W)\otimes \id_{F(a)}"]
&&F(p)\otimes F(b)\ar[r,"F(p)\otimes !"']&F(p)
\\
&&F(n)\otimes F(b)\otimes F(a)\ar[ur,"F(n)\otimes F(b)\otimes !"]&&
\end{tikzcd}
\]
For preservation of the monoidal product, notice that if $(V,a)$ is a dilation of $f:m\to n$ and $(W,b)$ is a dilation of $g:p\to q$ 
then $((\id_m\otimes \swp\otimes \id_p)\circ (V\otimes W),a\otimes b)$ is a dilation of $f\otimes g$. 
Indeed 
\[
\begin{tikzpicture}[baseline= (a).base]
\node[scale=.95] (a) at (0,0){
\begin{tikzcd}
F(m)\otimes F(p)\ar[rrr,"\hat F(f)\otimes \hat F(g)",bend left=10]
\ar[r,"F(V)\otimes F(W)"']
\ar[rrr,"\hat F(f\otimes g)",bend right=40]
&
F(n)\otimes F(a)\otimes F(q)\otimes F(b)
\ar[rr,"F(n)\otimes !\otimes F(q)\otimes !"]
\ar[dr,"F(n)\otimes \swp\otimes F(b)"']
&&F(n)\otimes F(q)
\\
&&F(n)\otimes F(q)\otimes F(a)\otimes F(b)
\ar[ur,"F(n)\otimes F(q)\otimes !"]
\end{tikzcd}
};
\end{tikzpicture}
\vspace{-1cm}
\]
\end{proof}

\section{Discussion and outlook}

\paragraph{Terminal units and discarding in quantum theory and elsewhere.}
Monoidal categories with terminal unit are often called `affine monoidal categories' or `semi-cartesian monoidal categories'. The importance of this structure is widely recognized in both Categorical Logic and in the Categorical Quantum Mechanics programme. In categorical logic, affine monoidal categories have long been considered as a version of linear logic which is resource sensitive but where resources can be discarded (generally, see e.g.~\cite{jacobs-affine,walker-substructural}; in the quantum setting, see~\cite{sv-qlambda,adams-qpel}). In CQM, traces have long been regarded as important in the Selinger's CPM construction (e.g.~\cite{selinger-cpm,coecke-cp}, and \cite[Ch.~6]{ck-pictures}, but also \cite{ch-ax-cp,ch-purity,ktw-qkd,ku-causal}). Effectus theory is a bridge between logic and CQM, and affine monoidal categories play a key role there
(e.g.~\cite[\S 10]{cjww-effectus}, \cite{aj-comet}, \cite{tull-otp}). 
Despite all this interest, Theorem~\ref{theorem} appears to be novel.

\paragraph{Affine reflections in general.}
We can restate our main theorem in terms of an adjunction (following e.g.~\cite[Thm.~IV.1.2(ii)]{maclane}).
The category $\SMCat$ of (small) symmetric monoidal categories and symmetric monoidal functors has a full subcategory 
$\AMCat$ comprising those monoidal categories for which the unit is terminal. 
The full and faithful embedding $\AMCat\to \SMCat$ has a left adjoint 
$L:\SMCat\to\AMCat$.
In other words,
$\AMCat$ is a reflective subcategory of $\SMCat$.
The universal property of $\CPTP$ from Theorem~\ref{theorem} can be rephrased in these terms as follows.
\begin{corollary}The symmetric monoidal category of CPTP maps is the affine reflection of the symmetric monoidal category of isometries:
\[L(\Isometry)\cong \CPTP\] \label{corollary}
\end{corollary}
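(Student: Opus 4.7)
The plan is to deduce this from Theorem~\ref{theorem} together with the standard characterization of reflective subcategories via universal arrows (see, e.g., \cite[\S IV.1]{maclane}). Recall that when $i\colon\AMCat\hookrightarrow\SMCat$ is a full embedding, a left adjoint $L$ to $i$ is equivalently specified by assigning to each $\CatA\in\SMCat$ an object $L(\CatA)\in\AMCat$ together with a unit $\eta_\CatA\colon\CatA\to i(L(\CatA))$ in $\SMCat$ enjoying the following universal property: for every $\CatB\in\AMCat$ and every symmetric monoidal functor $F\colon\CatA\to i(\CatB)$, there is a unique symmetric monoidal functor $\hat F\colon L(\CatA)\to\CatB$ with $i(\hat F)\circ\eta_\CatA=F$. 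The whole point of the corollary is then to identify this universal object, for $\CatA=\Isometry$, with a concrete category.

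First I would observe that $\CPTP$ lies in $\AMCat$: Definition~\ref{def:cptp} (and the discussion immediately following it) records that the monoidal unit $1\in\CPTP$ is terminal, since the trace is the unique CPTP map $n\to 1$. Next, $\Embedding\colon\Isometry\to\CPTP$ is by construction a symmetric monoidal functor, hence a morphism of $\SMCat$. Theorem~\ref{theorem} is then precisely the statement that the pair $(\CPTP,\Embedding)$ satisfies the universal property above with $\CatA=\Isometry$: every symmetric monoidal $F\colon\Isometry\to\CatB$ into an affine monoidal $\CatB$ factors uniquely through $\Embedding$ as a symmetric monoidal functor $\hat F\colon\CPTP\to\CatB$. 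In other words, $(\CPTP,\Embedding)$ is a universal arrow from $\Isometry$ to the inclusion $i$.

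Finally, I would invoke the standard fact (the same uniqueness already used in the paragraph after Theorem~\ref{theorem}) that universal arrows are unique up to unique isomorphism: any other choice $L(\Isometry)$ of reflection comes equipped with a unit $\eta_\Isometry\colon\Isometry\to i(L(\Isometry))$ having the same universal property, and comparing the two unique extensions in either direction yields mutually inverse symmetric monoidal functors $L(\Isometry)\cong\CPTP$ in $\AMCat$, compatible with the units. This gives the isomorphism asserted by Corollary~\ref{corollary}.

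There is really no technical obstacle beyond carefully matching the statement of Theorem~\ref{theorem} to the definition of a unit of an adjunction; the existence of $L$ is granted in the paragraph preceding the corollary, so all that remains is the identification of its value at $\Isometry$. The only subtlety worth spelling out, if one wants to be fully rigorous about the $2$-categorical flavour of $\SMCat$, is that the uniqueness in Theorem~\ref{theorem} is strict (equality of symmetric monoidal functors) rather than merely up to monoidal natural isomorphism, which is exactly what is needed to get an isomorphism in $\AMCat$ rather than only an equivalence.
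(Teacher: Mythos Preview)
Your proposal is correct and matches the paper's approach exactly: the paper does not give a separate proof of the corollary but presents it as a direct restatement of Theorem~\ref{theorem} in the language of adjunctions, citing \cite[Thm.~IV.1.2(ii)]{maclane}, which is precisely the universal-arrow characterization you invoke. Your write-up simply fills in the standard details (that $\CPTP\in\AMCat$, that $\Embedding$ is the unit, and that universal arrows are unique up to unique isomorphism) that the paper leaves implicit.
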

The reflection $L$ has also been investigated by Hermida and Tennent~\cite[Cor.~2.11]{hermida-tennent}. They work in the dual, `co-affine', setting, and they use the construction for the different purpose of modelling specification logics for non-quantum programs. 
Dualizing their analysis, we see that in general the category $L(\CatA)$ can be described as having the same objects as $\CatA$ but 
the morphisms $m\to n$ are equivalence classes of pairs ($a$, $f\colon m\to n\otimes a$) for the equivalence relation generated by 
\begin{equation}\label{hermida-tennent}
(a,f)\sim (b,(n\otimes g)f)\quad \text{for }f\colon m\to n\otimes a,\ g\colon a\to b
\end{equation}
In Appendix~\ref{appendix} we discuss the affine reflection of the category of injections; see also~\cite[\S 4]{hermida-tennent} where other examples are also given. 

\paragraph{Further comparison with the CPM construction.}
The general relationship between the affine reflection $L$ and the CPM construction from categorical quantum mechanics warrants further investigation. 
We make some preliminary remarks. First we recall that the CPM construction has been described as an initial object in a category~\cite{ch-ax-cp,coecke-cp}, but it is unclear how this initiality
relates abstractly to our universal property. 
However, we can compare the CPM approach with our approach based on affine reflections as follows. Recall that
the CPM construction considers those maps between abstract matrix algebras for which there exists a dilation. 
Thus 
\begin{itemize}
\item Two dilations are equated in the CPM construction when they give rise to the same map between abstract matrix algebras;
\item Two dilations are equated in the affine reflection when they are equal according to the equivalence relation generated by~\eqref{hermida-tennent}. 
\end{itemize}
These two notions of equivalence of dilation coincide when dilations are essentially unique, as in Lemma~\ref{lemma}, but this is unlikely to be the case in an arbitrary categorical setting.
Indeed, this essential uniqueness of dilations is often taken as a postulate for reconstructing quantum mechanics (e.g.~\cite{purification}).

\paragraph{Relation to quantum circuits}

When $\CatA$ is a PROP, \emph{i.e.} there is an object $A$ such that every object is of the form $A\otimes \dots \otimes A$, then $L(\CatA)$ is again a PROP, which is obviously presented (in the sense of App. $A.2$ of \cite{baez2017props}, Sec. $2.1$ of \cite{bonchi2017interacting}) by one generating morphism $\begin{tikzpicture}[circuit ee IEC,yscale=0.6,xscale=0.5]
\draw (0,0) to (2ex,0) node[ground,xshift=.65ex] {};
\end{tikzpicture}:1\to 0$ and equations of the form:
\begin{equation}
\begin{tikzpicture}[circuit ee IEC,yscale=0.6,xscale=0.5,baseline={([yshift=-.5ex]current bounding box.center)}]
    \tikzstyle{operator} = [draw,fill=white] 
\node (op0) at (-1,0) {};
\node[operator,xshift=.65ex,minimum size=4em] (op1) at (-2,1) {f};
\draw (-0.2,1.5) to (2ex,1.5) node[ground,xshift=.65ex] {};
\draw (-0.2,0.5) to (2ex,0.5) node[ground,xshift=.65ex] {};
\draw[-] (-3.3,0) to (-4,0);
\draw[-] (-3.3,1) to (-4,1);
\draw[-] (-3.3,2) to (-4,2);
\end{tikzpicture} 
\quad = \quad
\begin{tikzpicture}[circuit ee IEC,yscale=0.6,xscale=0.5,baseline={([yshift=-.5ex]current bounding box.center)}]
\draw (0,2) to (2ex,2) node[ground,xshift=.65ex] {};
\draw (0,1) to (2ex,1) node[ground,xshift=.65ex] {};
\draw (0,0) to (2ex,0) node[ground,xshift=.65ex] {};
\end{tikzpicture}
\qquad\qquad\qquad\qquad
\begin{tikzpicture}[circuit ee IEC,yscale=0.6,xscale=0.5,baseline={([yshift=-.5ex]current bounding box.center)}]
    \tikzstyle{operator} = [draw,fill=white] 
\node (op0) at (-1,0) {};
\node[operator,xshift=.65ex,minimum size=4em] (op1) at (-2,1) {g};
\draw (-0.2,2) to (2ex,2) node[ground,xshift=.65ex] {};
\draw (-0.2,1) to (2ex,1) node[ground,xshift=.65ex] {};
\draw (-0.2,0) to (2ex,0) node[ground,xshift=.65ex] {};
\draw[-] (-3.3,0.5) to (-4,0.5);
\draw[-] (-3.3,1.5) to (-4,1.5);
\end{tikzpicture} 
\quad = \quad
\begin{tikzpicture}[circuit ee IEC,yscale=0.6,xscale=0.5,baseline={([yshift=-.5ex]current bounding box.center)}]
\draw (0,1.5) to (2ex,1.5) node[ground,xshift=.65ex] {};
\draw (0,0.5) to (2ex,0.5) node[ground,xshift=.65ex] {};
\end{tikzpicture}
\end{equation}
and so on.

In particular, if we focus on the PROP of quantum circuits without discarding and measurement, $\QIsometry$, which is the full subcategory of $\Isometry$ generated by $2$, 
then we can form $L(\QIsometry)$ by freely adding discarding. In  Appendix~\ref{appendix2} we show that $L(\QIsometry)$ is a full subcategory of $\CPTP$, by using a variation of the proof of Theorem \ref{theorem}. We understand that there is ongoing work to add discarding to the ZX-calculus \cite{jeandel2018complete}, which may be along similar lines. 

\paragraph{Summary and directions.}
The main idea of this paper is that the categories that arise in quantum theory can and should be made canonical by virtue of universal properties that are motivated by physics. The universal property in Theorem~\ref{theorem} and
Corollary~\ref{corollary} is directly motivated by the three steps of the physical argument in the introduction. 

Our starting point for this work was the programming-like axiomatization of completely positive unital maps between C*-algebras~\cite{staton-popl15}. This is now part of a bigger ongoing programme to investigate how to give universal properties to other aspects of operator algebras. For example:
\begin{itemize}
\item The category $\Isometry$ also has another important monoidal structure $\oplus$, 
given on objects by addition ($m\oplus n=m+n$), 
and this forms a bimonoidal category~\cite{laplaza}. We have preliminary results extending Theorem~\ref{theorem} 
to take account of this bimonoidal structure. It might be interesting to relate this to the CP* construction~\cite{cpstar}.
\item The isometries $m\to n$ have a topological structure, making $\Isometry$ a topologically enriched category. We have preliminary results extending Theorem~\ref{theorem} to give a canonical topological enrichment for $\CPTP$, using~\cite{cts-stinespring}. 
\end{itemize}
An ultimate goal is to investigate whether universal properties suggest new categories for quantum theory.

\paragraph{Acknowledgements.}
It has been helpful to discuss this work with Chris Heunen, Jamie Vicary, Bas Westerbaan, and the Nijmegen and Oxford quantum groups. Thanks to anonymous reviewers for their suggestions. 
The work was supported by ERC Grant QCLS, EPSRC Grant  EP/N007387/1, and a Royal Society University Research Fellowship.

\bibliographystyle{eptcsini}
\bibliography{refs}

\appendix
\section{Matrix algebra representation of the affine reflection of injections}
\label{appendix}
\newcommand{\Comparison}{F}
The main result of this paper (Thm.~\ref{theorem}) is that the affine reflection of the isometries is the category of CPTP maps. 
But one can construct an affine reflection of any symmetric monoidal category~\cite{hermida-tennent}. 
We now consider the case of injections, which might be thought of as a non-quantum analogue of pure reversibility.
\begin{definition}
The symmetric monoidal category $\Injection$ has objects natural numbers, considered as sets, and morphisms injections between them. 
Composition is composition of functions. The monoidal structure is multiplication on objects, and the pairing of injections on morphisms. 
\end{definition}
This category $\Injection$ can be thought of as a wide subcategory of the category of isometries, once we understand 
an injection $f\colon m\to n$ as an isometry
$V_f\colon m\to n$ which is the linear map $\CC^m\to \CC^n$ with $(V_f(\vec a))_{f(i)}=a_i$ and $(V_f(\vec a))_j=0$ if $j\not\in\mathsf{image}(f)$. 
 
The category $\Function$ of all functions between natural numbers is also a symmetric monoidal category with a terminal unit. 
But it is not the free affine reflection. 
\begin{proposition}
There is no symmetric monoidal functor $F\colon \Function\to\CPTP$ making the following diagram commute: 
\[
\begin{tikzcd}
\Injection \arrow[d,"V"'] \arrow[r]&\Function\ar[d,dashed,"F"]\\
\Isometry\arrow[r,"\Embedding"]&\CPTP
\end{tikzcd}\]
\end{proposition}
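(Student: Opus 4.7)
The plan is to derive a contradiction by locating an equation in $\Function$ whose image under any such $F$ would force an impossible identity in $\CPTP$. The key observation is that $\Function$ has a diagonal $\Delta\colon 2\to 2\otimes 2$, given on the underlying sets by $i\mapsto(i,i)$, which satisfies the cartesian identity $(\id_2\otimes{!}_2)\circ\Delta = \id_2$. Crucially, $\Delta$ is itself an injection, so its value under $F$ is already pinned down by the outer commuting square.

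First I would unwind what commutativity forces. Since $\Delta$ lies in $\Injection$ and $V(\Delta)$ is the isometry $V_\Delta\colon\ket{i}\mapsto\ket{ii}$, we must have $F(\Delta)=\Embedding(V_\Delta)\colon\varrho\mapsto V_\Delta\varrho V_\Delta^{*}$. Next, terminality of $1$ in $\CPTP$ together with symmetric monoidality of $F$ forces $F(\id_2\otimes{!}_2)$ to be the partial trace $\id_2\otimes\tr$. Functoriality applied to the cartesian identity then yields
\[
\id_2 \;=\; F(\id_2) \;=\; (\id_2\otimes\tr)\circ \Embedding(V_\Delta).
\]

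The final step is to evaluate this composite on a concrete input and observe that the identity fails. Taking $\varrho=\ket{+}\langle+|$ gives $V_\Delta\varrho V_\Delta^{*}=\tfrac12\sum_{i,j}\ket{ii}\langle jj|$, and tracing out the second factor produces the maximally mixed state $\tfrac12\Id$, not $\ket{+}\langle+|$. Thus the right-hand side above is the decoherence channel, which is not $\id_2$, yielding the desired contradiction.

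The only real obstacle is conceptual rather than computational: one has to notice that although $\Delta$ and ${!}$ together obey the cartesian triangle identity in $\Function$, the isometry $V_\Delta$ realising classical copying is, in $\CPTP$, a Stinespring dilation only of dephasing rather than of the identity. Once this clash between the cartesianness of $\Function$ and the Stinespring semantics forced on $\Delta$ is spotted, the rest is a short, direct calculation.
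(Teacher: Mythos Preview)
Your proof is correct and follows the same strategy as the paper: exhibit an injection $g$ and a discarding map $d$ with $d\circ g=\id$ in $\Function$, then observe that the CPTP map forced on $g$ by the commuting square does not yield the identity after partial trace. The only difference is the choice of witness: the paper uses an ad hoc injection $f\colon 3\to 6$ given by $(1\mapsto 1,\,2\mapsto 2,\,3\mapsto 6)$ together with $(!_2\otimes\id_3)\circ f=\id_3$, whereas you use the diagonal $\Delta\colon 2\to 2\otimes 2$ with $(\id_2\otimes!_2)\circ\Delta=\id_2$. Your witness is smaller and more conceptually transparent, since it pinpoints the obstruction as the familiar fact that conjugating by $V_\Delta$ and tracing out the ancilla is the dephasing channel rather than the identity---i.e., the clash between the cartesian comonoid structure available in $\Function$ and the no-cloning behaviour in $\CPTP$.
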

\begin{proof}
Suppose (to get a contradiction) that $F\colon \Function\to \CPTP$ is such a monoidal functor.
Consider the injection $f:3\to 6$ defined by $(1\mapsto 1,2\mapsto2,3\mapsto 6)$. Notice that $(!_2\otimes \id_3)\circ f=\id_3$ as functions. 
Since $F$ is functorial and monoidal, we have $F((!_2\otimes \id_3)\circ f)=(!_2\otimes \id_3)\circ F(f) =(!_2\otimes\id_3)\circ \Embedding(V_f)$. So for a $3\times 3$ matrix $M=(m_{i,j})_{1\leq i,j\leq 3}$ 
we must have \[F((!_2\otimes \id_3)\circ f)(M)=
\tr_2(V_fMV_f^*)=
\Big(\begin{smallmatrix} m_{1,1} & m_{1,2} & 0 \\ m_{2,1} & m_{2,2} &0 \\ 0&0&m_{3,3}\end{smallmatrix}\Big)\neq M\text.\] 
So $F((!_2\otimes \id_3)\circ f)\neq F(\id_3)$, a contradiction. 
\end{proof}

In his work on (non-quantum) specification logic, Tennent~\cite{tennent} proposed the following category 
in place of $\Function$. (Actually, Tennent also allowed infinite sets, and considered the dual category, but we skip over that for now.)
\begin{definition}[\cite{tennent}, \S3]
The category $\Tennent$ has objects natural numbers, and morphisms $m\to n$ are pairs $(Q,f)$ where $Q$ is an equivalence relation on $m$ and $f\colon m\to n$ is a function
that is injective on each $Q$-equivalence class: if $f(i)=f(i')$ and $Q(i,i')$ then $i=i'$. 
The identity morphism $m\to m$ is $(m\times m,\mathrm{id}_m)$, where $(m\times m)$ is the universal equivalence relation. The composite
$(R,g)\cdot (Q,f)$ is $(S,gf)$, where 
$S(i,i')$ iff $Q(i,i')$ and $R(f(i),f(i'))$. 
\end{definition}
Roughly, Tennent's intention was that the objects $m$ and $n$ are sets of allowed memory states,
and the morphisms describe how different memory states relate to each other. 

\begin{proposition}[Hermida and Tennent,~\cite{hermida-tennent}, Thm.~4.4]\label{prop:hermida}
The functor $\Injection\to \Tennent$, taking an injection $f\colon m\to n$ to $(m\times m,f)$, exhibits $\Tennent$ as the affine reflection of $\Injection$. 
\end{proposition}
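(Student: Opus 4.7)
The plan is to mirror the proof of Theorem~\ref{theorem} closely, with injections in place of isometries, $\Tennent$-morphisms in place of CPTP maps, and a classical analogue of Stinespring's dilation replacing Lemma~\ref{lemma}.

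First I would establish the dilation: every morphism $(Q,g)\colon m \to n$ in $\Tennent$ is of the form $(\id_n \otimes {!}_a) \circ V$ for some injection $V\colon m\to n\otimes a$, regarded as a $\Tennent$-morphism via the embedding. Concretely, take $a$ of cardinality $|m/Q|$, fix an enumeration $\pi\colon m/Q \to a$, and set $V(i)\defeq (g(i), \pi([i]_Q))$. Injectivity of $V$ is precisely the requirement that $g$ be injective on each $Q$-class, which is part of the data of a $\Tennent$-morphism. A direct computation using the composition formula of $\Tennent$ (together with the observations that ${!}_a = (\Delta_a, {!}_a)$ and $\id_n = (n\times n, \id_n)$) confirms that the composite has underlying function $g$ and equivalence relation exactly $Q$. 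The same calculation also shows that the unit of $\Tennent$ is terminal, so that the universal property is even sensible to ask for.

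Second, I would prove uniqueness of this dilation up to injection of the ancilla. Given two dilations $V\colon m\to n\otimes a$ and $W\colon m\to n\otimes b$ of the same $(Q,g)$, writing $V(i)=(g(i),\alpha(i))$ and $W(i)=(g(i),\beta(i))$, both $\alpha$ and $\beta$ must have fibers exactly the $Q$-classes, so there is a canonical bijection $\phi\colon \alpha(m)\to \beta(m)$ with $\beta=\phi\circ\alpha$. Taking the pushout of the finite-set span $a\hookleftarrow \alpha(m)\xrightarrow{\phi}\beta(m)\hookrightarrow b$, realized as a natural number $c$, yields injections $V'\colon a\to c$ and $W'\colon b\to c$ that restrict to $\phi$, whence $(\id_n\otimes V')\circ V = (\id_n\otimes W')\circ W$.

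With these two lemmas in place, the remainder of the argument copies the proof of Theorem~\ref{theorem} verbatim. For any strict symmetric monoidal $F\colon \Injection\to \CatB$ with $\CatB$ having terminal unit, I would set $\hat F(Q,g)\defeq (F(n)\otimes {!})\circ F(V)$ on a chosen dilation $V$; independence of the choice and uniqueness of $\hat F$ both follow from the dilation uniqueness above together with terminality of the unit in $\CatB$, via the same triangle diagram used in the main proof. Preservation of identities, composition, and the monoidal product is handled by the same constructions of dilations of composites and of tensors. The main obstacle is the uniqueness of the dilation: the finite-set pushout is conceptually trivial, but realizing it as a natural number while ensuring that $V'$ and $W'$ are genuine injections when $\alpha(m)$ or $\beta(m)$ is a proper subset of its ancilla requires a small amount of book-keeping.
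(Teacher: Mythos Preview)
The paper does not supply its own proof of this proposition; it is simply attributed to Hermida and Tennent~\cite{hermida-tennent}. So there is no in-paper argument to compare against directly. That said, your proof is correct and fits the paper well: the dilation $V(i)=(g(i),\pi([i]_Q))$ is exactly the injection $(g,q)\colon m\to n\otimes(m/\!_Q)$ that the paper itself invokes in equation~\eqref{comparison}, and your computation that the composite $(\id_n\otimes{!}_a)\circ V$ recovers $(Q,g)$ is right (the equivalence relation of the composite is precisely ``same $\alpha$-fiber''). Your uniqueness step is also sound: the pushout of two injections of finite sets along a common subset yields injections, so the book-keeping you flag is genuinely minor.

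In terms of approach, what the paper sketches of the Hermida--Tennent method (around equation~\eqref{hermida-tennent}) is slightly different in shape: they build $L(\CatA)$ for arbitrary $\CatA$ as equivalence classes of pairs $(a,f\colon m\to n\otimes a)$ under the relation generated by post-composition on the ancilla, establish its universal property once and for all, and then identify $L(\Injection)$ with $\Tennent$ by exhibiting a concrete isomorphism. Your route instead verifies the universal property of $\Tennent$ directly, recapitulating the Stinespring-style argument of Theorem~\ref{theorem} with a bespoke classical dilation lemma. The Hermida--Tennent route is more modular (the universal property is proved abstractly, independent of the particular $\CatA$), while yours is more self-contained and keeps the analogy with the quantum case explicit; either is perfectly adequate here.
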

As a corollary of Proposition~\ref{prop:hermida}, since $\CPTP$ has a terminal unit, there is an identity-on-objects symmetric monoidal functor $\Comparison\colon\Tennent\to \CPTP$ making the following diagram commute:
\[
\begin{tikzcd}
\Injection \arrow[d,"V"'] \arrow[r]&\Tennent\ar[d,dashed,"\Comparison"]\\
\Isometry\arrow[r,"\Embedding"]&\CPTP
\end{tikzcd}\]
One can use the techniques in~\cite{hermida-tennent} to calculate that the functor $\Comparison$ takes a morphism $(Q,f)\colon m\to n$ to the composite CPTP map:
\begin{equation}\label{comparison}
\Comparison(Q,f)\ =\ m\xrightarrow{V_{(f,q)}(-)V_{(f,q)}^*} n\otimes (m/\!_Q)\xrightarrow{\tr_{(m/\!_Q)}} n
\end{equation}
where $q\colon m\to m/\!_Q$ is the quotient of $m$ by the equivalence relation $Q$. Notice that although $f\colon m\to n$ need not be injective, the pair $(f,q)\colon m\to n\otimes (m/\!_Q)$ will always be injective \cite[Prop.~4.1]{hermida-tennent}. 
In fact, nothing is lost by regarding Tennent's morphisms as CPTP maps:
\begin{proposition}
The functor $\Comparison\colon \Tennent\to \CPTP$ is faithful. 
\end{proposition}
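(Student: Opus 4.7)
The plan is to show that the CPTP map $\Comparison(Q,f)$ explicitly encodes both the function $f$ and the equivalence relation $Q$, by computing its action on the matrix units $e_i e_j^* \in M_m$. Faithfulness will then follow by direct inspection of the output matrices, rather than by appealing to Lemma~\ref{lemma}.

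First I would unwind the definition \eqref{comparison}. The isometry $V_{(f,q)}\colon \CC^m\to\CC^n\otimes\CC^{m/Q}$ acts on standard basis vectors by $e_i\mapsto e_{f(i)}\otimes e_{q(i)}$, so a short calculation gives
\[
\Comparison(Q,f)(e_i e_j^*)
= \tr_{m/Q}\bigl(e_{f(i)} e_{f(j)}^* \otimes e_{q(i)} e_{q(j)}^*\bigr)
= \delta_{q(i),q(j)}\, e_{f(i)} e_{f(j)}^*,
\]
where the Kronecker $\delta$ is $1$ precisely when $Q(i,j)$ holds, since then $q(i)=q(j)$ in $m/Q$.

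Next, assume $\Comparison(Q,f) = \Comparison(Q',f')$ for two morphisms $m\to n$ in $\Tennent$. I would specialize first to $i=j$: since $Q(i,i)$ always holds, the formula above gives $e_{f(i)} e_{f(i)}^* = e_{f'(i)} e_{f'(i)}^*$, and reading off the unique nonzero diagonal entry forces $f(i)=f'(i)$ for all $i$, so $f=f'$. Then for arbitrary $i,j$, the identity reduces to $\delta_{q(i),q(j)}\, e_{f(i)} e_{f(j)}^* = \delta_{q'(i),q'(j)}\, e_{f(i)} e_{f(j)}^*$; since $e_{f(i)} e_{f(j)}^*\neq 0$, this forces $Q(i,j)\Leftrightarrow Q'(i,j)$, hence $Q=Q'$.

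The main (and essentially only) obstacle is carrying out the partial-trace bookkeeping in the first step correctly, in particular checking that $(f,q)\colon m\to n\otimes(m/Q)$ really is injective so that $V_{(f,q)}$ is an isometry (this is exactly the condition that $f$ be injective on $Q$-classes, by \cite[Prop.~4.1]{hermida-tennent}). Once the explicit formula for $\Comparison(Q,f)$ on matrix units is in place, faithfulness is immediate.
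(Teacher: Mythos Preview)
Your proposal is correct and takes essentially the same approach as the paper: both compute the action of $\Comparison(Q,f)$ on matrix units, read off $f$ from the diagonal case $i=j$, and read off $Q$ from the general case. Your version is in fact slightly more explicit, since you give the full output $\delta_{q(i),q(j)}\,e_{f(i)}e_{f(j)}^*$ rather than just the relevant entries, but the argument is the same.
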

\begin{proof}
For $i,i'\leq m$, let $e_{i,i'}$ be the basic $m\times m$ matrix with $0$ everywhere except $1$ in $(i,i')$. By expanding \eqref{comparison} we see that 
\begin{equation}\label{equalfunctions}
((F(Q,f))(e_{i,i}))_{j,j}=\begin{cases}1&\text{if }j=f(i)
\\0&\text{otherwise}\end{cases}
\end{equation}
and 
\begin{equation}\label{equalrelations}
((F(Q,f))(e_{i,i'}))_{f(i),f(i')}=\begin{cases}1&\text{if }Q(i,i')
\\0&\text{otherwise.}\end{cases}
\end{equation}
Thus if $F(Q,f)=F(R,g)$ then $f=g$ by \eqref{equalfunctions} and $Q=R$ by \eqref{equalrelations}. 
\end{proof}
So we can equivalently understand Tennent's morphisms as those CPTP maps between matrix algebras for which there is an injective dilation. 

From the perspective of Lemma~\ref{lemma}, a curious corollary is that 
in this non-quantum setting, there is always a canonical choice of dilation for which the ancilla is smaller than $m$. 

\section{Specialisation to quantum circuits}
\label{appendix2}
The categories involved in the main result of the paper \ref{theorem} have objects natural numbers. Each natural number $n$ represents a state space of dimension $n$. In quantum computing the setting is often restricted to systems composed of qubits only. A qubit has a state space of dimension $2$ and formalisms using quantum circuits only consider systems of dimension $2^n$ for some natural number $n$, representing the evolution of $n$ qubits. 

We now consider a variation of Theorem \ref{theorem} where objects $n$ of our categories represent qubit state spaces of dimension $2^n$.

\begin{definition}
The symmetric monoidal category $\QIsometry$ has objects natural numbers and morphisms $m\to n$ are isometries $f:\mathbb{C}^{2^m}\to\mathbb{C}^{2^n}$. The monoidal structure is given on objects by addition (NB $2^{n}2^m=2^{n+m}$) and on morphisms by tensor product of linear maps. Thus it is a full monoidal subcategory of $\Isometry$.
\end{definition}

\begin{definition}
The symmetric monoidal category $\QCPTP$ has objects natural numbers and morphisms $m\to n$ are CPTP maps $f:\mathcal{M}_{2^m}\to \mathcal{M}_{2^n}$. The monoidal structure is given similarly on objects by addition and on morphisms by tensor product. Thus it is a full monoidal subcategory of $\CPTP$.
\end{definition}

The functor $\Embedding:\Isometry\to\CPTP$ given by $E(V)=\rho\mapsto V\rho V^*$ restricts to an identity-on-objects strict symmetric monoidal functor $\Embedding_2:\QIsometry\to\QCPTP$.

\begin{proposition}
For every symmetric strict monoidal category with terminal unit $\CatB$ and every symmetric strict monoidal functor $F:\QIsometry\to\CatB$ there is a unique symmetric strict monoidal functor ${\widehat{F}:\QCPTP\to\CatB}$ making the following diagram commute:
\begin{equation}\begin{tikzcd}
\QIsometry \arrow[rd,"F"'] \arrow[r,"\Embedding_2"]&\QCPTP\ar[d,dashed,"\hat F"]\\
&\CatB
\end{tikzcd}
\end{equation}
\end{proposition}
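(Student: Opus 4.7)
The plan is to adapt the proof of Theorem~\ref{theorem} directly, with an additional step of padding the ancilla to a power of two so that everything stays inside $\QIsometry$. Note that $\QIsometry$ is a full monoidal subcategory of $\Isometry$ via the identification $n \mapsto 2^n$, but the ancilla dimension $a$ appearing in Stinespring's dilation (Lemma~\ref{lemma}) need not be a power of two. This is the only new issue to resolve.

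On objects we set $\hat F(n) \defeq F(n)$, forced by the commuting diagram. For a morphism $f : m \to n$ in $\QCPTP$, apply Lemma~\ref{lemma} in $\CPTP$ to obtain a Stinespring dilation $V : 2^m \to 2^n \cdot a$ for some natural number $a$. Pick any $k$ with $2^k \geq a$, and let $\iota_a^k : \mathbb{C}^a \to \mathbb{C}^{2^k}$ be the canonical coordinate-embedding isometry. Then $V' \defeq (\id_{2^n} \otimes \iota_a^k) \circ V$ is an isometry $2^m \to 2^{n+k}$, hence a morphism $m \to n+k$ in $\QIsometry$. Define
\[
\hat F(f) \defeq \Big(F(m) \xrightarrow{F(V')} F(n) \otimes F(k) \xrightarrow{F(n) \otimes !} F(n)\Big).
\]

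Next I would verify that $\hat F(f)$ is independent of the padding $k$ and of the dilation. Independence of $k$ follows from uniqueness of maps into the terminal unit: if $k' \geq k$, the canonical embedding $\iota^{k'}_k$ gives $!_{F(k')} \circ F(\iota^{k'}_k) = !_{F(k)}$, so a larger padding factors through the smaller one via a terminal map. For independence of dilation, given another dilation $W : 2^m \to 2^n \cdot b$, Lemma~\ref{lemma} provides isometries $V'' : a \to c$, $W'' : b \to c$ with $(\id_{2^n} \otimes V'')V = (\id_{2^n} \otimes W'')W$; pad $c$ to some common $2^K \geq c$ by a further embedding, giving paddings of $V$ and $W$ to isometries $2^m \to 2^{n+K}$ in $\QIsometry$ that are literally equal, and then combine with the previous step.

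Functoriality, preservation of the monoidal product, and uniqueness of $\hat F$ then follow exactly the arguments of Theorem~\ref{theorem}, with the observation that the dilations constructed there for $gf$ and for $f \otimes g$ can all be padded simultaneously inside $\QIsometry$ (the product of two padded ancillas is again a power of two). The main obstacle — and really the only one beyond bookkeeping — is managing the ancilla padding coherently so that different choices are provably identified; once the ``pad to a larger power of two and use terminality'' lemma is established, everything else is mechanical transcription from the proof of Theorem~\ref{theorem}.
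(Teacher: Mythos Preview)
Your proposal is correct and follows essentially the same approach as the paper: pad the Stinespring ancilla to a power of two via a canonical coordinate embedding, and for well-definedness use the uniqueness clause of Lemma~\ref{lemma} together with a further padding of the mediating ancilla $c$ to a power of two, so that all comparison maps live in $\QIsometry$ and terminality of the unit collapses them. The only organisational difference is that you isolate ``independence of the padding $k$'' as a separate lemma, whereas the paper absorbs this into the general independence-of-dilation argument (two paddings of the same $V$ are in particular two power-of-two dilations of $f$).
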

\begin{proof}
Uniqueness is obtained in the same way as in Theorem \ref{theorem}, using Lemma \ref{lemma}. 

For existence, the proof of Theorem \ref{theorem} needs to be modified because the dilation might not be a power of $2$.
In fact a similar argument goes through but we need to verify:
\begin{enumerate}
\item the existence of a dilation of a size a power of $2$
\item the independence of the choice of the dilation
\item that the defined $\widehat{F}$ preserves the symmetric monoidal structure
\end{enumerate}
Item $3$ is shown in the same way as in the proof of Theorem \ref{theorem}. We now elaborate Items $1$ and $2$.

Item 1. We show the existence of a dilation of the right size as follows.
Let $f:n\to m$ be a morphism in $\QCPTP$. 
By Lemma \ref{lemma} there is an isometry $U:2^n\to (2^ma)$ in $\Isometry$ such that $(U,a)$ is a Stinespring dilation for $f$. 

If $b\geq a$, let denote by $\Inj_{a,b}:a\to b$ the canonical injection represented by the matrix with $a$ 1 on the diagonal and $0$ elsewhere.
Let $V:(2^ma)\to (2^m2^k)$ be the isometry $id_{2^m}\otimes \Inj_{a,2^k}$ where $k$ is a natural number such that $2^k\geq a$. 
Then $(VU,2^k)$ is a Stinespring dilation for $f$. Let $\widehat{F}(f)=(Id_{F(2^m)}\otimes!)\circ F(VU)$.

Item 2. We show that $\widehat{F}(f)$ is independent of the choice of dilation. Let $f:m\to n$ be a morphism in $\QCPTP$ and $(V,2^k)$ and $(U,2^{k'})$ be two Stinespring dilations for $f$. By Lemma \ref{lemma} there are isometries $V',U'$ such that the following diagram commutes:

\[
\begin{tikzcd}
&&&
\\
& 2^n.2^k \ar[dr,"id\otimes E(V')"'] \ar[drrr,"id\otimes !",bend left=22]  &&  &
\\
2^m \ar[ur,"E(V)"] \ar[dr,"E(U)"']  && 2^n.b  \ar[rr,"id\otimes !"] && 2^n
\\
& 2^n.2^{k'} \ar[ur,"id\otimes E(U')"] \ar[urrr,"id\otimes !"',bend right=20]
\end{tikzcd}
\]

Using a similar map $id\otimes \Inj_{b,2^{b'}}$ as in the proof of point $1$, we obtain the following commuting diagram:

\[
\begin{tikzcd}
&&&  \textcolor{blue}{2^n2^{b'}} \ar[ddr, blue, "id\otimes !", bend right=10]
\\
& 2^n.2^k \ar[dr,"id\otimes E(V')"'] \ar[drrr,"id\otimes !",bend left=22] \ar[urr,blue,bend right= 7,"id\otimes (\Inj\circ E(V'))"] &&  &
\\
2^m \ar[ur,"E(V)"] \ar[dr,"E(U)"']  && 2^n.b \ar[uur,blue, "id\otimes \Inj", near start, bend right= 22] \ar[rr,"id\otimes !"] && 2^n
\\
& 2^n.2^{k'} \ar[ur,"id\otimes E(U')"] \ar[urrr,"id\otimes !"',bend right=20] \ar[uuurr,blue,bend right= 50,"id\otimes (\Inj \circ E(U'))"']
\end{tikzcd}
\]
and we conclude similarly as in the proof of Theorem \ref{theorem} that $(id\otimes !)\circ F(V)=(id\otimes !)\circ F(U)$.
\end{proof}
\end{document}